\newtheorem{lemma}{Lemma}
\begin{document}

% - - - - - - - - - - - - - - - - - - - - - - - - - - - - - - - - - - - - - - - - - -
\title{On the Effective Energy Efficiency of Ultra-reliable Networks in the Finite Blocklength Regime} 
\author{Mohammad Shehab, Endrit Dosti, Hirley Alves, and Matti Latva-aho\\
	
	\IEEEauthorblockA{
		Centre for Wireless Communications (CWC), University of Oulu, Finland\\
	}
	Email: firstname.lastname@oulu.fi
}
%
% - - - - - - - - - - - - - - - - - - - - - - - - - - - - - - - - - - - - - - - - - -
\maketitle

% - - - - - - - - - - - - - - - - - - - - - - - - - - - - - - - - - - - - - - - - - -

\begin{abstract}
Effective Capacity (EC) indicates the maximum communication rate subject to a certain delay constraint while effective energy efficiency (EEE) denotes the ratio between EC and power consumption. In this paper, we analyze the EEE of ultra-reliable networks operating in the finite blocklength regime. We obtain a closed form approximation for the EEE in Rayleigh block fading channels as a function of power, error probability, and delay. We show the optimum power allocation strategy for maximizing the EEE in finite blocklength transmission which reveals that Shannon's model underestimates the optimum power when compared to the exact finite blocklength model. Furthermore, we characterize the buffer constrained EEE maximization problem for different power consumption models. The results show that accounting for empty buffer probability (EBP) and extending the maximum delay tolerance jointly enhance the EC and EEE. 
\end{abstract}

%\begin{keywords}
%\end{keywords}

% - - - - - - - - - - - - - - - - - - - - - - - - - - - - - - - - - - - - - - -
\section{Introduction}\label{introduction}
The new generation of mobile communication is expected to support a multitude of smart devices interconnected via machine-to-machine (M2M) type networks, enabling the Internet of Things (IoT). Energy efficient transmission while guaranteeing quality-of-service (QoS) is an ultimate goal in the design of 5G. QoS constraints ranging from low latency in the order of few milliseconds and packet loss rate (PLR) ($< 10^{-3}$) are key requirements for \textit{Ultra-Reliable Low Latency Communication} (URLLC) \cite{paper1,Dosti,NokiacMTC2016}. In order to boost throughput and reliability while guaranteeing low latency, it becomes crucial to investigate and optimize the resources that are allocated for transmission. In most cases, URLL application devices have limited power resources which dictates careful planning of throughput maximization with wise energy consumption models. Furthermore, the information and communication technology industry is estimated to contribute to 6$\%$ of global CO$_2$ emission by 2020 \cite{green}. This urges the invention of low power consumption, green communication schemes.

To satisfy extremely low latency in real time applications and emerging technologies such as e-health, wireless sensor networks, and smart grids, an attractive solution is communication with short blocklength messages \cite{paper1}. When the packets are short and delay requirements are stringent,  performance metrics, such as Shannon capacity or outage capacity, provide a poor benchmark \cite{paper2, paper14}. Therefore, fundamentally new approaches are needed \cite {paper5,paper1}. The maximum achievable rate of finite blocklength packets is defined in \cite{paper14} in terms of blocklength and error probability.

The effective capacity (EC) metric was first introduced in \cite{paper6} to guarantee statistical QoS requirements by capturing the physical and link layers aspects. EC maps the maximum arrival rate that can be supported by a network with a maximum delay bound of $\delta$ and a delay outage probability of $\Lambda$. In \cite{paper5}, Gursoy characterized the EC in bits per channel use (bpcu) for short packets in quasi-static fading channels where the channel coefficients remain constant for the whole time spanning one packet transmission. A cognitive transmission scheme was analyzed in \cite{paper9}, where the target was to maximize the secondary user EC with constraints on the interference affecting the primary user. The per-node EC in massive M2M networks was studied in \cite{eucnc} proposing three methods to alleviate interference namely power control, graceful degradation of delay constraint and the hybrid method.

Effective energy efficiency (EEE) is defined as the ratio between EC and the total consumed power \cite{EEEbook}. The maximization of EEE is of great importance for the upcoming massive M2M communication and thus the IoT, where the goal is to maximize the throughput for each consumed unit of power. In \cite{paper7}, the empty buffer probability (EBP) model was considered as an EEE booster for long packets transmission. The trade off between EEE and EC was studied in \cite{paper8} where the authors suggested an algorithm to maximize the EC subject to EEE constraint. However, the probability of transmission error that appears in finite blocklength communication due to imperfect coding was not considered. She et al. showed in \cite{paper20} that the relation between EEE and delay in wireless systems is not always a tradeoff. They concluded that a linear relation between service rate and power consumption leads to an EEE-delay non-tradeoff region.

In this paper, we derive a mathematical expression for the EEE in quasi-static Rayleigh fading for delay limited networks when applying linear power consumption model. We characterize the optimum power allocation strategy for the maximization of EEE. Afterwards, we resort to the power consumption model accounting for the probability of empty buffer in finite blocklength transmission, and prove that this model is valid for blocklength limited packets. Then, we emphasize that considering the probability of emptying the buffer during transmission of short packets allows for a more precise estimation of the EC and EEE which are higher when compared to the full buffer scenario. We depict how the optimum power allocation is affected by limiting the packet length and the performance gap that appears accordingly. Finally, we evaluate the impact of delay limitations on EC and EEE. We deduct that extending the allowable delay provides gains in EC and EEE in all cases.

The rest of the paper is organized as follows: in  Section \ref {system model}, we introduce the system model and elucidate the relation between EC and EEE. Next, Section \ref{EEE_finite blocklength} presents the EEE analysis in Rayleigh block fading scenario and characterizes optimum error and power allocation strategies for EEE maximization when applying the linear power consumption model. We illustrate the EBP model in finite blocklength transmission and characterize the EEE maximization with delay, error and power constraints in Section \ref{EBP}. The results are discussed in Section \ref{results}. Finally, Section \ref{conclusion} concludes the paper.

\section{Preliminaries} \label{system model}

Consider a point to point transmission scenario in which two nodes communicate through a Rayleigh block fading channel with blocklength $n$. The received vector $\mathbf {y}\in \mathbb{C}^n$ is given by
\begin{align}\label{eq1}
\mathbf {y}=h\mathbf {x}+\mathbf {w},
\end{align}
where $\mathbf {x} \in \mathbb{C}^n$ is the transmitted packet, and the fading coefficient is denoted by $h$ which is assumed to be quasi-static and Rayleigh distributed. This implies that $h$ remains constant over $n$ symbols which span the whole packet duration. $\mathbf{w}$ is the additive complex Gaussian noise vector whose entries are of unit variance. Furthermore, we assume that CSI is available at each node.
\subsection{Communication at Finite Blocklength}
In finite blocklength transmission, short packets are conveyed at rate that depends not only on the SNR, but also on the blocklength and the probability of error $\epsilon$ \cite{paper2}. In this case, $\epsilon$ has a small value but not vanishing. For error probability $\epsilon \in\left[ 0,1\right] $, the normalized achievable rate in bpcu is given by 
\begin{align}\label{eq3}
\begin{split}
r\approx&\log_2(1+\rho|h|^2) 
-\frac{Q^{-1}
	(\epsilon)\log_2(e)}{\sqrt{n}}\sqrt{ 1-\frac{1}{\left( 1+\rho|h|^2\right) ^{2}} } ,	
\end{split}
\end{align}
where $Q(\cdot)=\int_{\cdot}^{\infty}\frac{1}{\sqrt{2 \pi}}e^{\frac{-t^2}{2}} dt$ is the Gaussian Q-function, and $Q^{-1} (\cdot)$ represents its inverse, $\rho$ is the SNR and $|h|^2$ is the Rayleigh fading envelope coefficients, which is distributed as $f_Z(z)=e^{-z}$.  

\subsection{The relation between Effective Capacity and Effective Energy Efficiency} \label{EC_EEE} 
The concept of EC ($C_e$) indicates the capability of communication nodes to exchange data with maximum rate under statistical delay constraint. A statistical delay violation model implies that an outage occurs when a packet delay exceeds a maximum delay bound $\delta$ and the outage probability is defined as \cite{paper6}
\begin{align}\label{delay}
\Lambda=Pr(delay \geq \delta) \approx e^{-\theta. C_e. \delta},
\end{align}	
where $\Pr(\cdot)$ denotes the probability of a certain event. Conventionally, the tolerance of a system to long delays is measured by the delay exponent $\theta$. The system tolerates large delays for small values of $\theta$ (i.e., $\theta\rightarrow 0$), and it becomes stricter delay-wise for large values of $\theta$. For instance, a network with unity EC and an outage probability $\Lambda=10^{-3}$ can tolerate a maximum delay of $\delta=691$ symbol periods for $\theta=0.01$ and $\delta=23$ symbol periods when $\theta=0.3$. In quasi-static fading, the channel remains constant within each transmission period $n$ \cite{paper13}, and the EC in bits per channel use (bpcu) is \cite{paper5}
\begin{align}\label{EC}
C_e(\rho,\theta,\epsilon)=-\frac{\log\ \psi(\rho,\theta,\epsilon)}{n\theta}, 
\end{align} 
where
\begin{align}\label{psi} \psi(\rho,\theta,\epsilon)=\mathop{\mathbb{E}_{Z}}\left[\epsilon+(1-\epsilon)e^{-n\theta r}\right] ,
\end{align}
and $\log$ is the natural logarithm. In \cite{ paper5}, the effective capacity is statistically studied for single node scenario, but never to a closed form expression. It has been proven that the EC is concave in $\epsilon$ and hence, has a unique maximizer.

Indeed the raise of transmission power increases the transmission rate $r$ boosting the EC. However, this comes at the cost of energy consumption, which is not feasible for energy-limited systems such as smart grids, and
massive M2M type systems, which are the main concern of our analysis. Typically, these systems are isolated from stationary power sources. Therefore, it would be of high interest to study the energy consumption of these networks.

Defined as the ratio between EC and consumed power, the EEE metric indicates the network's capability of achieving a certain latency restricted rate with minimum energy consumption \cite{Petreska2016}. In this paper, we characterize the EEE and the optimum power allocation for different power models in the finite blocklength regime. 

\section{Linear power consumption model} \label{EEE_finite blocklength}
Consider the linear model in which power consumption is defined by \cite{Helmy}
\begin{equation}\label{Pt}
P_t(\rho)=\zeta \rho+P_c ,
\end{equation}
with $\zeta$ being the inverse drain efficiency of the transmit amplifier and $P_c$ the hardware power dissipated in circuit. For this model, the EEE is given by
\begin{equation}\label{EEE0}
\eta_{ee}=\frac{-\frac{1}{n\theta} \log\left(\mathop{\mathbb{E}_{Z}}\left[\epsilon+(1-\epsilon)e^{-n\theta r}\right]\right) }{\zeta \rho+P_c}. 
\end{equation} 
Notice that here the noise is normalized so that the SNR $\rho$ frankly represents the transmit power. This scenario assumes an always full buffer and does not account for EBP. In \cite{paper5}, a stochastic model for EC was studied, but never to a closed form expression. Here, we present a tight approximation for the EC and hence, the EEE.
\begin{lemma} \label{lemma 1}
	For a Rayleigh block fading channel with blocklength $n$, the EEE of the linear power consumption model is approximated as
	\begin{align}\label{Rayleigh}
	\begin{split}
	\eta_{ee}(\rho,\theta,\epsilon)\approx&-\frac{\log \left[ \epsilon+(1-\epsilon) \ \mathcal{J}\right]}{n \theta \left( \zeta \rho+P_c\right) },  
	\end{split}
	\end{align}	
	where 
	\begin{align}\label{c2.2}
	\begin{split}
	\mathcal{J}=e^{\frac{1}{\rho}}  \rho^\alpha  \left[\vphantom{\frac{\Gamma\left(\alpha-1,\frac{1}{\rho}\right) }{\rho^{2}}}\right. &  \left. \left( \frac{\beta^2}{2}+\beta+1\right)  \Gamma\left(\alpha+1,\frac{1}{\rho} \right) \right. \\ &\left. -\left(\frac{\beta^2}{2}+\beta\right)     \frac{\Gamma\left(\alpha-1,\frac{1}{\rho}\right) }{\rho^{2}} \right],  
	\end{split}
	\end{align} 		
	$\Gamma \left(\cdot,\cdot\right)$ is the upper incomplete gamma function \cite{Abramowitz}, $\alpha=\frac{-\theta n}{\log 2}$, $\beta=\theta \sqrt{n} Q^{-1}(\epsilon)\log_2e$, and $\gamma=\sqrt{(1-\frac{1}{(1+\rho z)^{2}})}$.
\end{lemma}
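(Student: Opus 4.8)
The plan is to reduce the entire statement to evaluating the single expectation $\mathcal{J}=\mathbb{E}_Z[e^{-n\theta r}]$. By linearity, \eqref{psi} gives $\psi(\rho,\theta,\epsilon)=\epsilon+(1-\epsilon)\mathcal{J}$, and substituting this into \eqref{EC}--\eqref{EEE0} reproduces exactly the outer structure $-\log[\epsilon+(1-\epsilon)\mathcal{J}]/(n\theta(\zeta\rho+P_c))$ appearing in \eqref{Rayleigh}. Hence all the content of the lemma lies in deriving the closed form \eqref{c2.2} for $\mathcal{J}$.

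First I would recast the integrand. Inserting the rate \eqref{eq3} into $e^{-n\theta r}$ and converting $\log_2$ to the natural logarithm, the capacity term contributes $e^{-\frac{n\theta}{\log 2}\log(1+\rho z)}=(1+\rho z)^{\alpha}$ with $\alpha=-\theta n/\log 2$, while the dispersion term contributes $e^{\beta\gamma}$ with $\beta=\theta\sqrt{n}\,Q^{-1}(\epsilon)\log_2 e$ and $\gamma=\sqrt{1-1/(1+\rho z)^2}$. Using $f_Z(z)=e^{-z}$ this leaves $\mathcal{J}=\int_0^\infty (1+\rho z)^{\alpha}e^{\beta\gamma}e^{-z}\,dz$, already exhibiting the parameters named in the lemma.

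The hard part will be the factor $e^{\beta\gamma}$: because $\gamma$ carries a square root, the integral has no elementary closed form, so the approximation must enter here. I would handle it in two steps. Expand $e^{\beta\gamma}\approx 1+\beta\gamma+\tfrac{\beta^2}{2}\gamma^2$ to second order; the quadratic term then contains $\gamma^2=1-1/(1+\rho z)^2$ exactly, and the only residual square root sits in the linear term. Removing it by $\gamma\approx\gamma^2$ (equivalently $\sqrt{1-x}\approx 1-x$, tight in the high-SNR/large-$z$ regime where $\gamma\to 1$) collapses the bracket to $(\tfrac{\beta^2}{2}+\beta+1)-(\tfrac{\beta^2}{2}+\beta)(1+\rho z)^{-2}$, a rational function of $(1+\rho z)$.

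Finally, $\mathcal{J}$ splits into two integrals of the common type $\int_0^\infty (1+\rho z)^{s}e^{-z}\,dz$ with $s=\alpha$ and $s=\alpha-2$. The substitution $t=(1+\rho z)/\rho$ maps each to $e^{1/\rho}\rho^{s}\int_{1/\rho}^\infty t^{s}e^{-t}\,dt=e^{1/\rho}\rho^{s}\Gamma(s+1,1/\rho)$, and assembling the two terms reproduces \eqref{c2.2} verbatim, the $s=\alpha-2$ piece becoming $\Gamma(\alpha-1,1/\rho)/\rho^2$ once $\rho^\alpha$ is factored out. The step most in need of justification is the control of the discarded higher-order terms, i.e. verifying that the second-order exponent expansion together with $\gamma\approx\gamma^2$ stays tight across the relevant ranges of $n$, $\rho$, and $\theta$; this is where I expect the bulk of the care to be required, and it can be supported numerically against the exact expectation.
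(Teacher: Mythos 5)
Your proposal is correct and follows essentially the same route as the paper's own proof in Appendix A: reduce everything to $\mathcal{J}=\mathbb{E}_Z\!\left[(1+\rho z)^{\alpha}e^{\beta\gamma}\right]$, expand $e^{\beta\gamma}$ to second order, and evaluate the resulting integrals of the form $\int_0^\infty (1+\rho z)^{s}e^{-z}\,\mathrm{d}z$ as upper incomplete gamma functions via the substitution $t=(1+\rho z)/\rho$. The single point of divergence is the removal of the residual square root in the linear term $\beta\gamma$: the paper invokes the expansion $\gamma\approx 1-\tfrac{1}{2(1+\rho z)^2}$ (the correct first-order binomial expansion of $\sqrt{1-x}$), whereas you use the cruder $\gamma\approx\gamma^2=1-\tfrac{1}{(1+\rho z)^2}$. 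Interestingly, your step is the one that is internally consistent with the lemma: carrying the paper's stated expansion through the integrals would produce the coefficient $\tfrac{\beta^2}{2}+\tfrac{\beta}{2}$, not $\tfrac{\beta^2}{2}+\beta$, multiplying $\Gamma\!\left(\alpha-1,\tfrac{1}{\rho}\right)/\rho^{2}$; the paper evidently drops the factor $\tfrac12$ when stating the value of the second integral, while your $\gamma\approx\gamma^2$ substitution reproduces (\ref{c2.2}) verbatim. The two variants coincide to leading order in the regime $\tfrac{1}{(1+\rho z)^2}\to 0$ where the approximation is meant to be tight, so this does not affect the lemma's validity as an approximation, but it does mean your derivation matches the stated formula exactly where the paper's own write-up does not. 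As you anticipate, neither you nor the paper bounds the discarded higher-order terms analytically; the paper settles this numerically (Fig.~\ref{EEE}).
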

\begin{proof} Please refer to Appendix A.
\end{proof}
\begin{lemma} \label{lemma 3}
	There is a unique local and global maximizer in $\epsilon$ for the EEE in Rayleigh block fading channels which is given by
	\begin{align}\label{e*}
	\begin{split}
	\epsilon^*(\rho,c,d)\approx\arg\min_{0 \leq \epsilon \leq 1} \  \epsilon+(1-\epsilon) \ \mathcal{J}.
	\end{split}
	\end{align} 
\end{lemma}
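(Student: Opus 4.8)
The plan is to reduce the maximization of the EEE over $\epsilon$ to the minimization of the argument of the logarithm in (\ref{Rayleigh}), and then to inherit the uniqueness of the maximizer from the concavity of the effective capacity already recalled in Section \ref{EC_EEE}.

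First I would observe that for a fixed operating SNR $\rho$ (and fixed $\theta$, $\zeta$, $P_c$), the consumed power $P_t(\rho)=\zeta\rho+P_c$ appearing in the denominator of (\ref{Rayleigh}) is strictly positive and does \emph{not} depend on $\epsilon$. Hence, viewed as a function of $\epsilon$ alone, the EEE is simply the effective capacity rescaled by the positive constant $1/P_t(\rho)$,
\begin{equation}
\eta_{ee}(\rho,\theta,\epsilon)=\frac{C_e(\rho,\theta,\epsilon)}{P_t(\rho)}.
\end{equation}
Multiplication by a positive constant preserves both concavity and the location of any maximizer, so every structural property of $C_e$ in $\epsilon$ carries over verbatim to $\eta_{ee}$.

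Next I would invoke the result recalled from \cite{paper5} that $C_e$ is concave in $\epsilon$ and therefore admits a unique maximizer on $[0,1]$. By the rescaling above, $\eta_{ee}$ is likewise concave in $\epsilon$ with the very same unique maximizer, which settles the claim that the local maximizer coincides with the global one. To locate it, I would use that $C_e=-\frac{\log\psi}{n\theta}$ with $\psi$ as in (\ref{psi}); since $-\log(\cdot)$ is strictly decreasing and $n\theta>0$, maximizing $C_e$ (equivalently $\eta_{ee}$) is the same as minimizing $\psi$. Substituting the tight approximation $\psi\approx\epsilon+(1-\epsilon)\mathcal{J}$ from Lemma \ref{lemma 1} then gives $\epsilon^*\approx\arg\min_{0\le\epsilon\le1}\,\epsilon+(1-\epsilon)\mathcal{J}$, as stated.

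The delicate point, and the step I expect to be the main obstacle, is that $\mathcal{J}$ itself depends on $\epsilon$ through $\beta=\theta\sqrt{n}\,Q^{-1}(\epsilon)\log_2 e$, so a direct convexity check on the approximant $\epsilon+(1-\epsilon)\mathcal{J}$ would require differentiating products of incomplete gamma functions composed with the inverse Gaussian $Q$-function, whose sign changes at $\epsilon=1/2$ and which diverges as $\epsilon\to 0$. Rather than grind through $\partial^2/\partial\epsilon^2$ of that composite expression, I would anchor the uniqueness on the concavity of the exact $C_e$ and lean on the tightness of the Lemma \ref{lemma 1} approximation, so that the stationary point of $\epsilon+(1-\epsilon)\mathcal{J}$ inherits the status of unique global minimizer. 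If a fully self-contained derivative test is preferred, one would instead verify the first-order condition $\frac{\partial}{\partial\epsilon}\big[\epsilon+(1-\epsilon)\mathcal{J}\big]=0$ and confirm a single sign change from negative to positive on $(0,1)$, which guarantees a unique interior minimum.
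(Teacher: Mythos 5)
Your proof is correct and follows essentially the same route as the paper: both rest on the cited result from \cite{paper5} (convexity of the expectation $\psi$ in $\epsilon$, equivalently concavity of $C_e$) together with the observation that the linear-model denominator $\zeta\rho+P_c$ is independent of $\epsilon$, so the EEE maximizer coincides with the EC maximizer, i.e., the minimizer of $\epsilon+(1-\epsilon)\,\mathcal{J}$. The paper likewise makes no direct convexity check on the approximant $\epsilon+(1-\epsilon)\,\mathcal{J}$ itself; it inherits uniqueness from the exact expectation, exactly as you propose in handling your ``delicate point.''
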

\begin{proof}
	The expectation in (\ref{psi}) is shown to be convex in $\epsilon$ in \cite{paper5} independent of the distribution of channel coefficients $Z$. Thus, it has a unique minimizer $\epsilon^*$ which is consequently the EC maximizer given by (\ref{e*}). Presuming constant transmit power and full buffer, there is a unique maximizer in $\epsilon$ for both EC and EEE. This is because the denominator of (\ref{EEE0}) does not depend on $\epsilon$, thus maximizing EC, also maximizes the EEE. 
\end{proof}
Lemma \ref{lemma 1} provides a numerical solution for $\epsilon^*$ which can be obtained via linear search. Note that $\beta$ is not a function of $z$ or $\rho$ which simplifies the problem. The maximum effective energy efficiency $\eta_{ee_{max}}$ can be obtained by substituting $\epsilon^*$ into (\ref{Rayleigh}).

Since the logarithmic term is dominant in the rate equation given in (\ref{eq3}), it is quite straightforward to verify that the rate function has a negative second derivative for practical SNR regions and therefore is concave in power. This firmly holds for non-extremely low SNR (i.e $\geq-10$ dB) regions and the mathematical proof will be shown later in a journal version. Following a similar procedure as in \cite{paper8} based on \cite{EEE_concave}, we can conclude that the EEE in the finite blocklength regime is also a quasi-concave function of power and strictly concave in its upper contour. Hence, the optimum power allocation for maximizing the EEE is obtained by differentiating (\ref{Rayleigh}) with respect to $\rho$ as follows

\begin{align}\label{p*7}
\begin{split}
\frac{\partial \eta_{ee}}{\partial \rho}&= -\left[\frac{\frac{(1-\epsilon)\mathcal{J}^{'}(\zeta \rho+P_c)}{\epsilon+(1-\epsilon)\mathcal{J}}-\zeta \log(\epsilon+(1-\epsilon)\mathcal{J})}{n \theta(\zeta\rho+P_c)^2} \right] \\
&\approx-\left[\frac{\frac{\mathcal{J}^{'}(\zeta \rho+P_c)}{n \theta\mathcal{J}(\zeta\rho+P_c)}-\frac{\zeta \log(\epsilon+(1-\epsilon)\mathcal{J})}{n \theta(\zeta\rho+P_c)}}{(\zeta\rho+P_c)} \right]=0.
\end{split}
\end{align}
Manipulating, we obtain the optimum power allocation $\rho^*$ as the solution to
\begin{align}\label{p*9}
\begin{split}
\eta_{ee}(\rho^*)=\frac{-1}{n \theta}\left( \frac{\mathcal{J}^{'}(\rho^*)}{\mathcal{J}(\rho^*)}\right). 
\end{split}
\end{align}
Let $T_1= \frac{\beta^2}{2}+\beta+1$ and $T_2=\frac{\beta^2}{2}+\beta$ in (\ref{c2.2}), then we have
\begin{align}\label{p*10}
\begin{split}
\mathcal{J}=e^{\frac{1}{\rho}} \rho^{\alpha}\left(T_1 \Gamma\left( \alpha+1,\frac{1}{\rho}\right) -\frac{T_2}{\rho^2} \Gamma\left( \alpha-1,\frac{1}{\rho}\right).  \right) 
\end{split}
\end{align}
To differentiate $\mathcal{J}$, we apply the derivative of the upper incomplete gamma function \cite{Gradshteyn} as follows
\begin{align}\label{p*11}
\begin{split}
&\mathcal{J}^{'}=\frac{\partial \mathcal{J}}{\partial \rho} \\
&=-\frac{1}{\rho^2}\left[\mathcal{J}+ \frac{\alpha}{\rho} \mathcal{J}- \frac{T_1}{\rho^\alpha}e^{-\frac{1}{\rho}} \right. 
\left.- \frac{2 T_2}{\rho}  \Gamma\left( \alpha-1,\frac{1}{\rho}\right)+\frac{e^{-\frac{1}{\rho}}}{\rho^\alpha}  \right] \\
&=-\frac{1}{\rho^2}\left[\left(1+\frac{\alpha}{\rho} \right)\mathcal{J} +(1-T_1)\frac{e^{-\frac{1}{\rho}}}{\rho^{\alpha}}- \frac{2 T_2}{\rho}  \Gamma\left( \alpha-1,\frac{1}{\rho}\right) \right]. 
\end{split}
\end{align}
Although we could differentiate $\mathcal{J}$, a closed form solution for (\ref{p*9}) does not exist. For this purpose, we can utilize Matlab root-finding functions, e.g., fzero or plotting in a similar way to \cite{paper8}.

\section{Empty buffer probability model} \label{EBP}
\begin{figure}[!t] % [!t] or [!b] or [!h] % force fitting, force top, force bottom, force text fitting
	\centering
	\includegraphics[width=0.77\columnwidth]{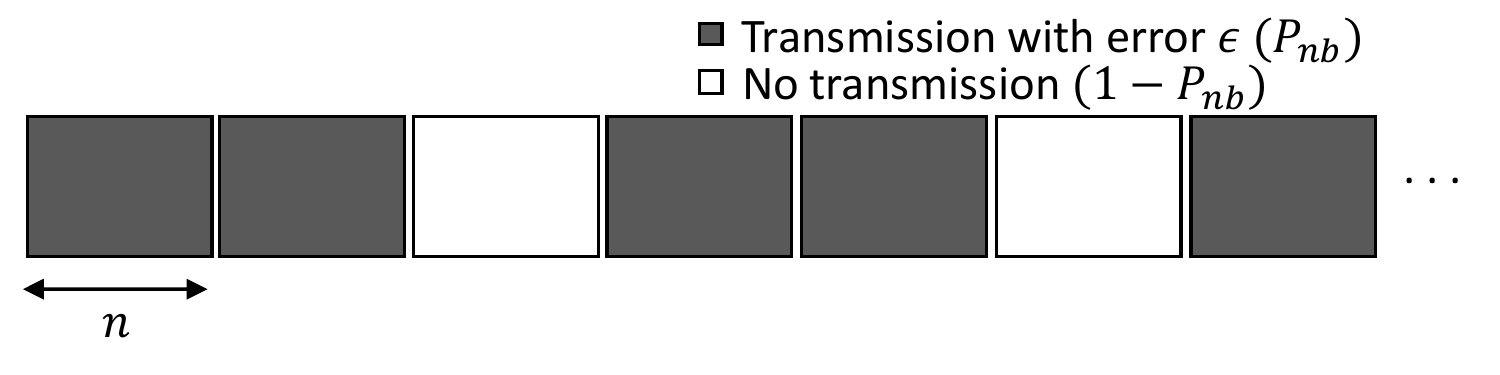}
	\vspace{-4mm}
	\caption{Transmission with empty buffer probability in quasi-static channel with blocklength $n$.}
	\label{Empty_buffer}
	\vspace{-4mm}
\end{figure}
In this section, we discuss the EEE in the finite blocklength regime for the power model which considers the EBP. Previously, we assumed that the buffer is always full which practically is not always the case. In real scenarios, there would be instants in which a certain node becomes idle and therefore has no data to transmit. Thus, we need to account for the case when the buffer is empty. Accordingly, we apply the model considered in \cite{paper7} to networks operating in the finite blocklength regime with non-vanishing probability of error $\epsilon$. We investigate the effect of EEE maximization with EBP and compare it to the case of always full buffer. After accounting for EBP, the transmission probability $P_{nb}$ is equal to ($1-$ the probability of empty buffer) and the transmission process appears in Fig. \ref{Empty_buffer}. For an arrival rate of $\lambda$ and a stable queue, the power consumption becomes 
\begin{equation}\label{Pt_nb}
P_t(\rho)=P_{nb}\zeta p+P_c=\frac{\lambda}{\mathop{\mathbb{E}}\left[r\right] }\zeta \rho+P_c,
\end{equation}
with $P_{nb}=\frac{\lambda}{\mathop{\mathbb{E}}\left[r\right] }$ denoting non-empty buffer probability (NBP). Note that $P_{nb}=1$ indicates that the buffer is always full. The EEE with EBP is given by
\begin{equation}\label{EEE1}
\eta_{ee}=\frac{-\frac{1}{n\theta}\log \left[ \epsilon+(1-\epsilon) \ \mathcal{J}\right]}{ \frac{\lambda}{\mathop{\mathbb{E}}\left[ r\right] }\zeta \rho+P_c},  
\end{equation} 
where the numerator represents the effective capacity in the finite blocklength regime.

\subsection{Verifying the effective energy efficiency model with empty buffer probability in finite blocklength}

The power consumption model considering the probability of empty buffer fulfills the characteristic properties of an energy efficiency function for the Shannon model \cite{paper7}. According to \cite{EEEbook}, an energy efficiency function must be non-negative, must be zero when the transmit power is zero, and must tend to zero as the transmit power tends to infinity. We start by verifying that this power consumption model is valid as well for short packets transmission. 
\begin{lemma} \label{EEEV}
	The EEE in (\ref{EEE1}) is zero for $\rho=0$ and tends to 0 when $\rho\rightarrow \infty$.
\end{lemma}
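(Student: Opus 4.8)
The plan is to read the EEE in (\ref{EEE1}) as a ratio $\eta_{ee}=C_e/P_t$, with numerator the effective capacity $C_e=-\frac{1}{n\theta}\log\!\left[\epsilon+(1-\epsilon)\mathcal{J}\right]$ and denominator the power $P_t=\frac{\lambda}{\mathbb{E}[r]}\zeta\rho+P_c$ from (\ref{Pt_nb}). Since both claims concern the endpoints of the SNR axis, I would study the numerator and the denominator separately at $\rho=0$ and as $\rho\to\infty$ and then combine them, exploiting that $P_c>0$ throughout.

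For $\rho=0$ I would first evaluate the rate (\ref{eq3}): setting $\rho=0$ gives $\log_2(1+\rho|h|^2)=0$ and $\sqrt{1-(1+\rho|h|^2)^{-2}}=0$, so $r\equiv 0$ for every channel realization; hence $\mathbb{E}_Z[e^{-n\theta r}]=1$ and $\mathcal{J}=1$. The numerator then collapses to $-\frac{1}{n\theta}\log[\epsilon+(1-\epsilon)]=-\frac{1}{n\theta}\log 1=0$. The delicate point is the denominator, where $\mathbb{E}[r]\to 0$ as well, so the term $\frac{\lambda}{\mathbb{E}[r]}\zeta\rho$ is an indeterminate $0/0$ form. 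I would resolve this not through an explicit small-$\rho$ expansion of $\mathbb{E}[r]$ (which is awkward, since the dispersion term can drive the approximate rate negative near $\rho=0$), but by noting that $P_{nb}=\lambda/\mathbb{E}[r]$ is a probability and therefore $0\le P_{nb}\le 1$, so $0\le P_{nb}\zeta\rho\le\zeta\rho\to 0$ and the squeeze theorem yields $P_t\to P_c>0$. Thus $\eta_{ee}=0/P_c=0$.

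For $\rho\to\infty$ the key observation is that, in contrast with the Shannon model, the finite-blocklength EC saturates. I would show $\mathcal{J}=\mathbb{E}_Z[e^{-n\theta r}]\to 0$ by dominated convergence: for every $z>0$ we have $r\to\infty$ as $\rho\to\infty$, while $r$ is bounded below uniformly in $z$ and $\rho$ by $-\frac{Q^{-1}(\epsilon)\log_2 e}{\sqrt{n}}$ (because $\log_2(1+\rho z)\ge 0$ and $\gamma<1$), so $e^{-n\theta r}\le e^{\beta}$ with $\beta$ as in Lemma \ref{lemma 1}, a constant dominating function. Hence the numerator converges to the finite constant $-\frac{1}{n\theta}\log\epsilon>0$. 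Meanwhile $\mathbb{E}[r]\sim\log_2\rho\to\infty$ grows only logarithmically, so $\frac{\lambda}{\mathbb{E}[r]}\zeta\rho\sim\frac{\zeta\lambda\rho}{\log_2\rho}\to\infty$ and the denominator diverges. A bounded numerator divided by a diverging denominator forces $\eta_{ee}\to 0$.

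The main obstacle is the denominator at $\rho=0$, where $\lambda/\mathbb{E}[r]$ and $\rho$ degenerate simultaneously; the cleanest escape is the probabilistic bound $P_{nb}\le 1$ rather than an asymptotic rate expansion. The secondary point requiring care is the limit $\mathcal{J}\to 0$ as $\rho\to\infty$, which needs the uniform lower bound on $r$ to justify exchanging limit and expectation, and which I would make explicit via the constant bound $e^{-n\theta r}\le e^{\beta}$. Everything else reduces to routine substitution and a comparison of the growth rates $\rho$ versus $\log_2\rho$.
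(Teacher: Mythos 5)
Your treatment of the $\rho\to\infty$ limit is correct, and in fact somewhat stronger than the paper's own argument: the paper merely upper-bounds the numerator by $-\frac{\log\epsilon}{n\theta}$ (via $\psi\geq\epsilon$) and applies L'Hopital's rule to $\rho/\mathbb{E}[r]$, whereas you prove genuine convergence of the numerator by dominated convergence (with the constant dominating function $e^{\beta}$, $\beta$ as in Lemma \ref{lemma 1}) and settle the denominator by the growth comparison of $\rho$ against $\log_2\rho$. Either route is fine there.

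The $\rho=0$ case, however, contains a genuine gap, and it sits exactly at the step you singled out as delicate. The inequality $P_{nb}=\lambda/\mathbb{E}[r]\leq 1$ is not a property of the formula (\ref{Pt_nb}); it is equivalent to queue stability, $\mathbb{E}[r]\geq\lambda$, and that hypothesis necessarily fails in the regime where you invoke it. As $\rho\to 0$ the service rate collapses: from (\ref{eq3}), $\mathbb{E}[r]\approx\frac{\rho}{\log 2}-c\sqrt{\rho}$ with $c=\sqrt{2}\,\frac{Q^{-1}(\epsilon)\log_2 e}{\sqrt{n}}\,\mathbb{E}[\sqrt{z}]>0$ for $\epsilon<1/2$, so for small $\rho$ one has $\mathbb{E}[r]<\lambda$ and even $\mathbb{E}[r]<0$ --- the very negativity you flagged as making an expansion ``awkward'' is what destroys the probabilistic bound. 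Near $\rho=0$ the quantity $\lambda/\mathbb{E}[r]$ is therefore negative or larger than $1$, never in $[0,1]$, and the squeeze $0\leq P_{nb}\zeta\rho\leq\zeta\rho$ is unavailable: you cannot treat $P_{nb}$ simultaneously as the literal ratio $\lambda/\mathbb{E}[r]$ (which is what appears in (\ref{EEE1}), the object the lemma is about) and as a genuine probability (which is what your bound requires). The indeterminate product $\rho\cdot\lambda/\mathbb{E}[r]$ has to be resolved analytically, which is precisely what the paper does: by L'Hopital's rule, $\lim_{\rho\to 0}\rho/\mathbb{E}[r]=\lim_{\rho\to 0}1/\mathbb{E}[\partial r/\partial\rho]=0$, because the dispersion term gives $\partial r/\partial\rho$ a $1/\gamma$ singularity at $\rho=0$; equivalently, from the expansion above, $\rho/\mathbb{E}[r]\approx-\sqrt{\rho}/c\to 0$. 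With that, the denominator of (\ref{EEE1}) tends to $P_c>0$, the numerator is $0$ since $r\equiv 0$ there, and the EEE vanishes. Your conclusion is correct, but the probability bound cannot carry the argument; the small-$\rho$ analysis you tried to avoid is unavoidable.
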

\begin{proof} Please refer to Appendix B.
\end{proof}

\subsection{Effective energy efficiency maximization with buffer constraints}
We investigate the EEE maximization with EC, delay, and power constraints. EC should be higher than the arrival rate $\lambda$ to guarantee a stable queue, while the transmission SNR $\rho$ is bounded by $\rho_{max}$. Thus, the optimization problem is formulated as  
\begin{equation}\label{op2}
\begin{split}
\max_{\rho \geq 0, \theta \geq 0} \ &\eta_{ee}=\frac{-\frac{1}{n\theta}\log \left[ \epsilon+(1-\epsilon) \ \mathcal{J}\right]}{P_{nb}\zeta \rho+P_c},  \\ 
s.t \ \ &C_e(\rho,\theta,\epsilon)\geq\lambda \\
&P_{nb} e^{-\theta \lambda \delta}\leq \Lambda \\
&\rho \leq \rho_{max} \\
&\epsilon \leq \epsilon_t
\end{split}
\end{equation} 
For the full buffer model, we set $P_{nb}$ to 1. We perform a line search for $\rho$ in the interval $\left[0,\rho_{max} \right]$. The optimum error probability is $\min\left[\epsilon^*,\epsilon_t \right]$ where $\epsilon^*$ is obtained from Lemma \ref{lemma 1}. When analyzing the empty buffer scenario, we set $P_{nb}=\frac{\lambda}{\mathop{\mathbb{E}}\left[ r\right] }$. Here, $\Lambda$ is the maximum allowed delay outage probability. In all cases, the optimal value of $\theta$ can be obtained from the second constraint at equality as
\begin{equation}\label{th_pb}
\theta^{*}(\rho)=\frac{1}{\lambda \delta} \log \frac{P_{nb}}{\Lambda }.
\end{equation}

\section{Results and discussion} \label{results}
In Fig. \ref{EEE}, we plot the EEE in Rayleigh block fading channel for different delay exponents using the expectation in (\ref{EEE0}) and Lemma \ref{lemma 1}. The network parameters are $n=500$ symbol periods and $\epsilon=10^{-3}$. The figure proves the accuracy of Lemma \ref{lemma 1}. Note that the EEE declines when the delay constraint becomes more strict. Furthermore, the figure shows the concavity of the upper contour of the EEE in the transmit power and the approximation in Lemma \ref{lemma 1} captures this concavity precisely. 
\begin{figure}[!t] % [!t] or [!b] or [!h] % force fitting, force top, force bottom, force text fitting
	\centering
	\includegraphics[width=1\columnwidth]{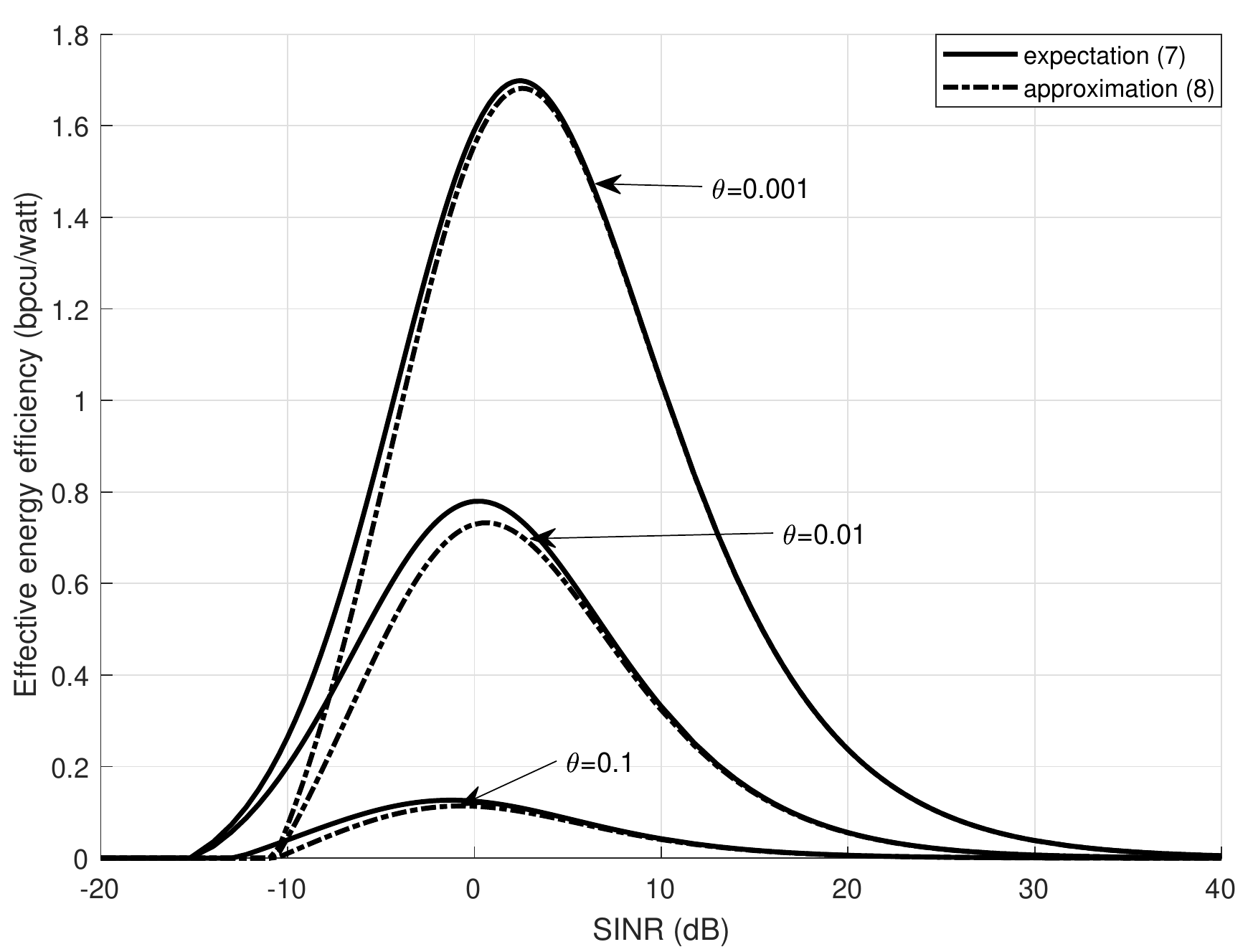}
	\centering
	\vspace{-4mm}
	\caption{Effective energy efficiency vs SNR in Rayleigh block fading for $n=500, \epsilon=10^{-3}$, $P_c=0.2, \zeta=0.2, \lambda=1$ and different delay exponents $\theta$.}
	\vspace{-4mm}
	\label{EEE}
\end{figure}

For the following simulations, we fix the network parameters as follows $\Lambda=10^{-2}, 10^{-3}, P_c=0.2 \ W, \ \zeta=0.2, \ \lambda=1, \ \delta=500$ symbol periods, and $n=500$ symbol periods, unless stated other wise. In Fig. \ref{EEE_e}, we evaluate the EEE as a function of error $\epsilon$ in case of EBP and compare it to the case where the buffer is always full while fixing the transmit power at $\rho=10$ dB. We observe that the EEE is concave in $\epsilon$ as stated in Lemma \ref{lemma 3}. It is obvious that considering the probability of empty buffer reflects a gain in the EEE over the full buffer model. Moreover, the figure depicts the EEE gap between the finite blocklength model and Shannon's bound when considering EBP where the Shannon's model considered in \cite{paper7} overestimates the EEE by more than $20 \%$ when compared to the exact finite blocklength model.  
\begin{figure}[!t] % [!t] or [!b] or [!h] % force fitting, force top, force bottom, force text fitting
	\centering
	\includegraphics[width=1\columnwidth]{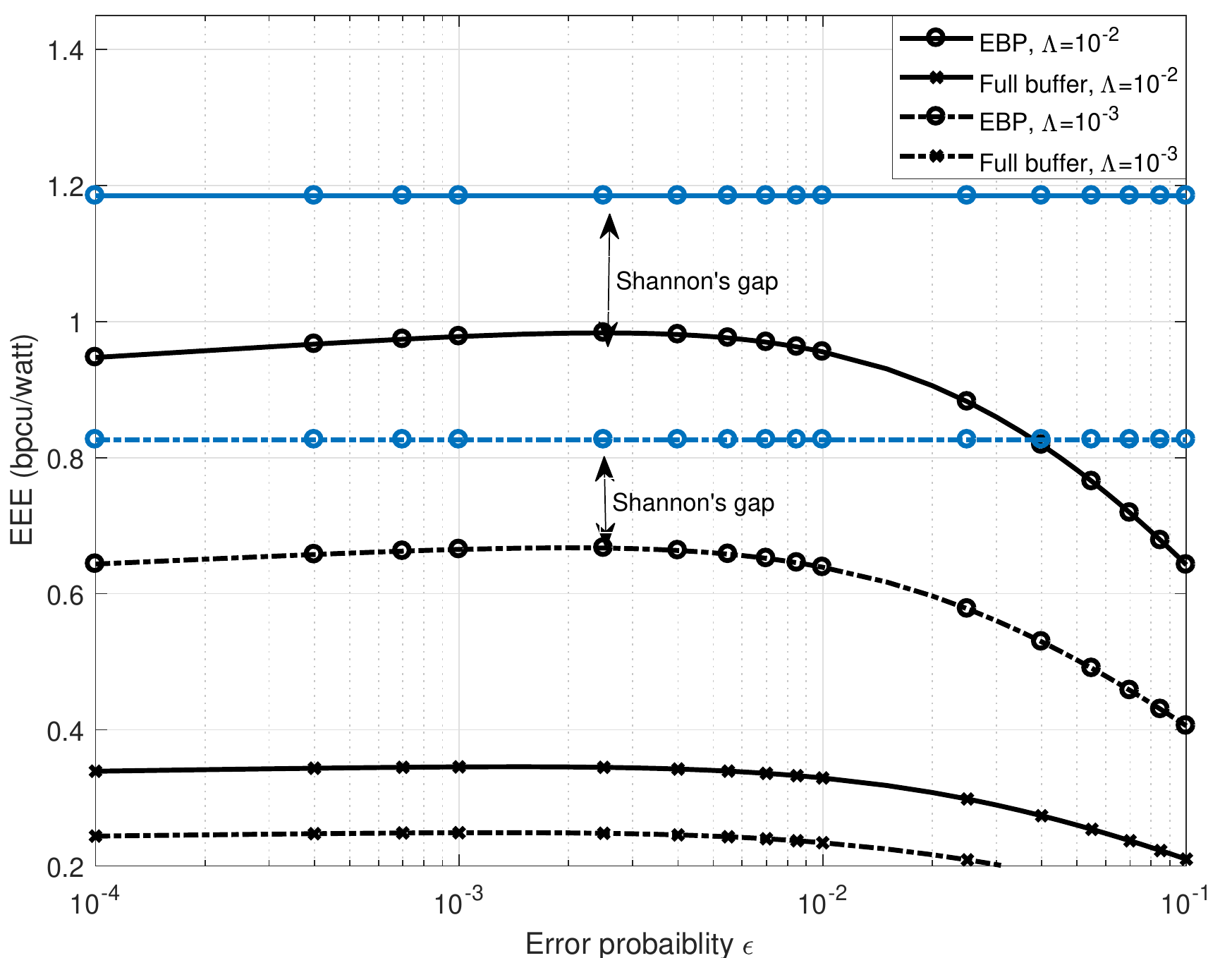}
	\centering
	\vspace{-4mm}
	\caption{EEE vs $\epsilon$ with and without empty-buffer probability for $\Lambda=10^{-2}, 10^{-3}, P_c=0.2, \zeta=0.2, \lambda=1, \delta=500$, and $n=500$.}
	\vspace{-4mm}
	\label{EEE_e}
\end{figure}

Fig. \ref{EEE_Dmax} depicts the achieved maximum EEE obtained from (\ref{op2}) for different delay limits $\delta$ where $\rho_{max}=10$ dB (variable transmission power), and $\epsilon_t=10^{-3}$. We observe that the EEE increases when extending the delay $\delta$ and relaxing the delay outage probability $\Lambda$. Again, the EEE is significantly higher when considering EBP. It is clear that the sporadic transmission scenario allows for a better modeling of the power consumption, thus is a more realistic model. This reflects that full buffer is the worst case, where we assume that all power will be consumed, while NBP models the fraction of time that is actually used for transmission of packets according to the queue congestion which highlights the gain of this model compared to always full buffer. Furthermore, the figure verifies the inaccuracy of Shannon's model when computing the EEE for relatively small packets where the inaccuracy gap reaches 0.3 bpcu/W in higher delay region.

\begin{figure}[!t] % [!t] or [!b] or [!h] % force fitting, force top, force bottom, force text fitting
	\centering
	\includegraphics[width=1\columnwidth]{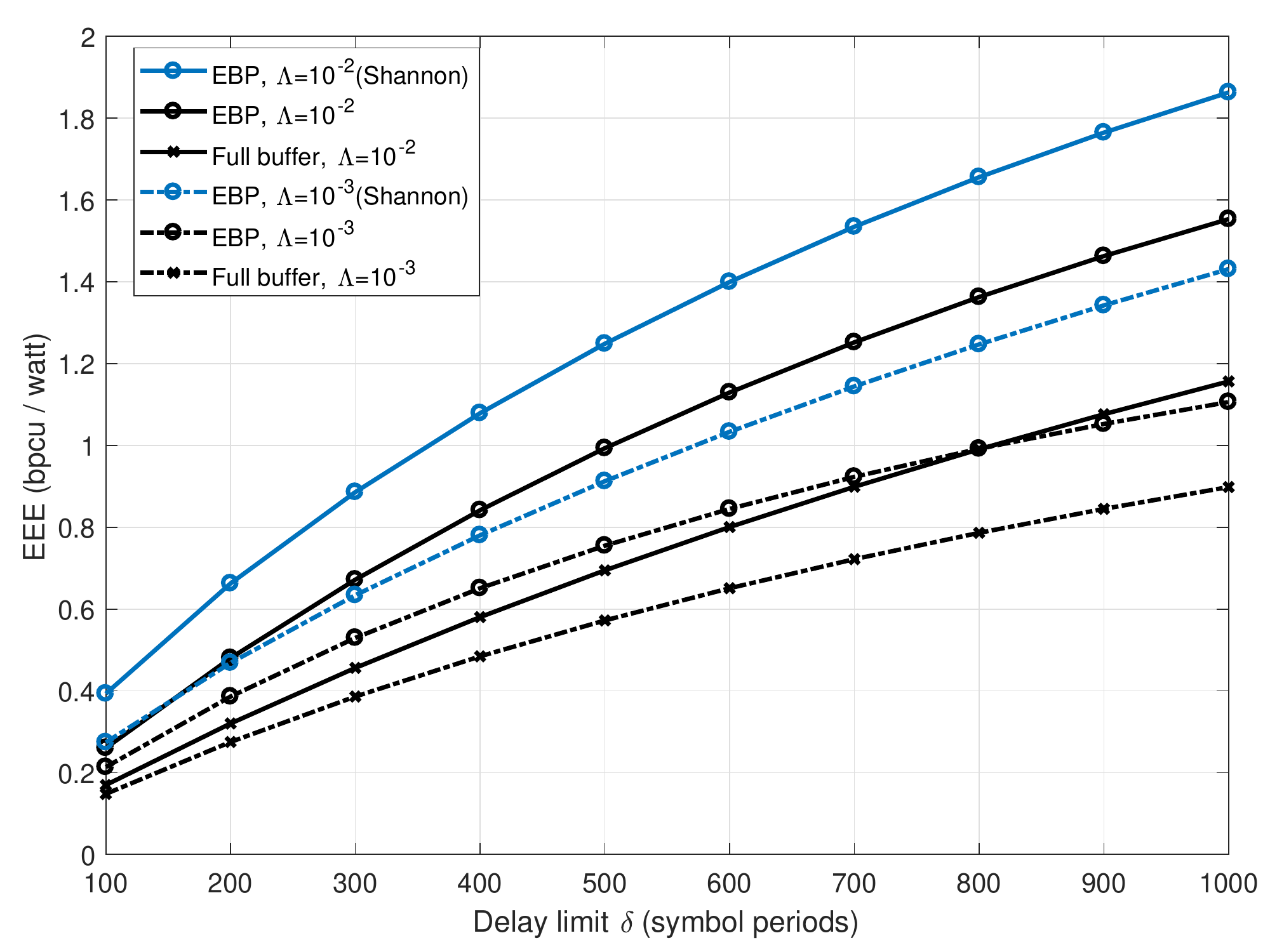}
	\centering
	\vspace{-4mm}
	\caption{EEE vs $\delta$ with and without empty buffer probability for $\Lambda=10^{-2}, 10^{-3},\ P_c=0.2 \ W, \ \zeta=0.2, \ \lambda=1$, $n=500$, and $\epsilon_t=10^{-3}$.}
	\vspace{-4mm}
	\label{EEE_Dmax}
\end{figure}

In Fig. \ref{optimal_power}, we plot the optimum power allocation for maximizing the EEE as a function of the maximum delay $\delta$ in case of EBP and always full buffer where $\rho_{max}=10$ dB. The error outage probability is fixed at $\epsilon=10^{-3}$, and the rest of the parameters are the same as in Fig. \ref{EEE_e}. The figure shows that the optimal power allocation which maximizes the EEE is significantly higher when EBP is considered. The figure also depicts that Shannon's model does not render an accurate power allocation to maximize the EEE as it underestimates the optimum power allocation when compared to the exact finite blocklength model. The power gap is ranges from $1$ to $2$ dB for the $\Lambda=10^{-2}$ as shown in the figure. Thus, we can exploit the extra power allocation that results from considering empty buffer and applying the finite blocklength model in order to boost the EC without losing energy efficiency. Moreover, the value of optimum power decays when the arrival rate $\lambda$ declines as shown for $\lambda=0.3$ in the full buffer model.

\begin{figure}[!t] % [!t] or [!b] or [!h] % force fitting, force top, force bottom, force text fitting
	\centering
	\includegraphics[width=1\columnwidth]{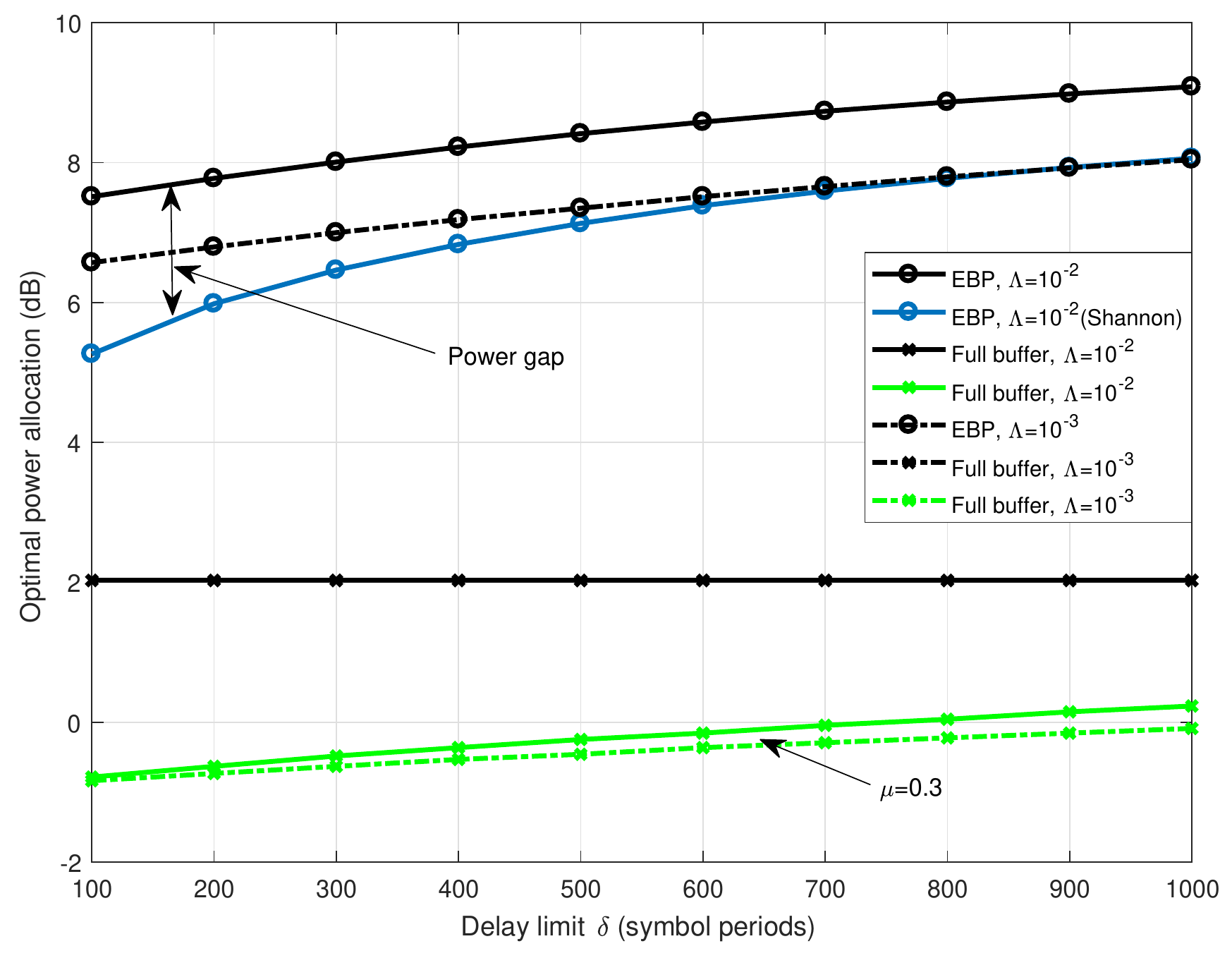}
	\centering
	\vspace{-4mm}
	\caption{Optimal power allocation vs $\delta$ with and without empty-buffer probability for $\Lambda=10^{-2}, 10^{-3}, P_c=0.2, \zeta=0.2, \lambda=1, \rho_{max}=10$ dB  and $\epsilon=10^{-3}$.}
	\label{optimal_power}
	\vspace{-4mm}
\end{figure}

Finally, we show the gain obtained in EC when considering the EBP model by plotting the EC obtained at the maximum EEE vs the maximum allowable delay $\delta$ in Fig. \ref{EC_Fig} for the same parameters as in Fig. \ref{optimal_power}. This figure shows that considering EBP not only maximizes the EEE, but also provides a significant boost in the EC when applying the optimum power allocation strategy.

\begin{figure}[!t] % [!t] or [!b] or [!h] % force fitting, force top, force bottom, force text fitting
	\centering
	\includegraphics[width=1\columnwidth]{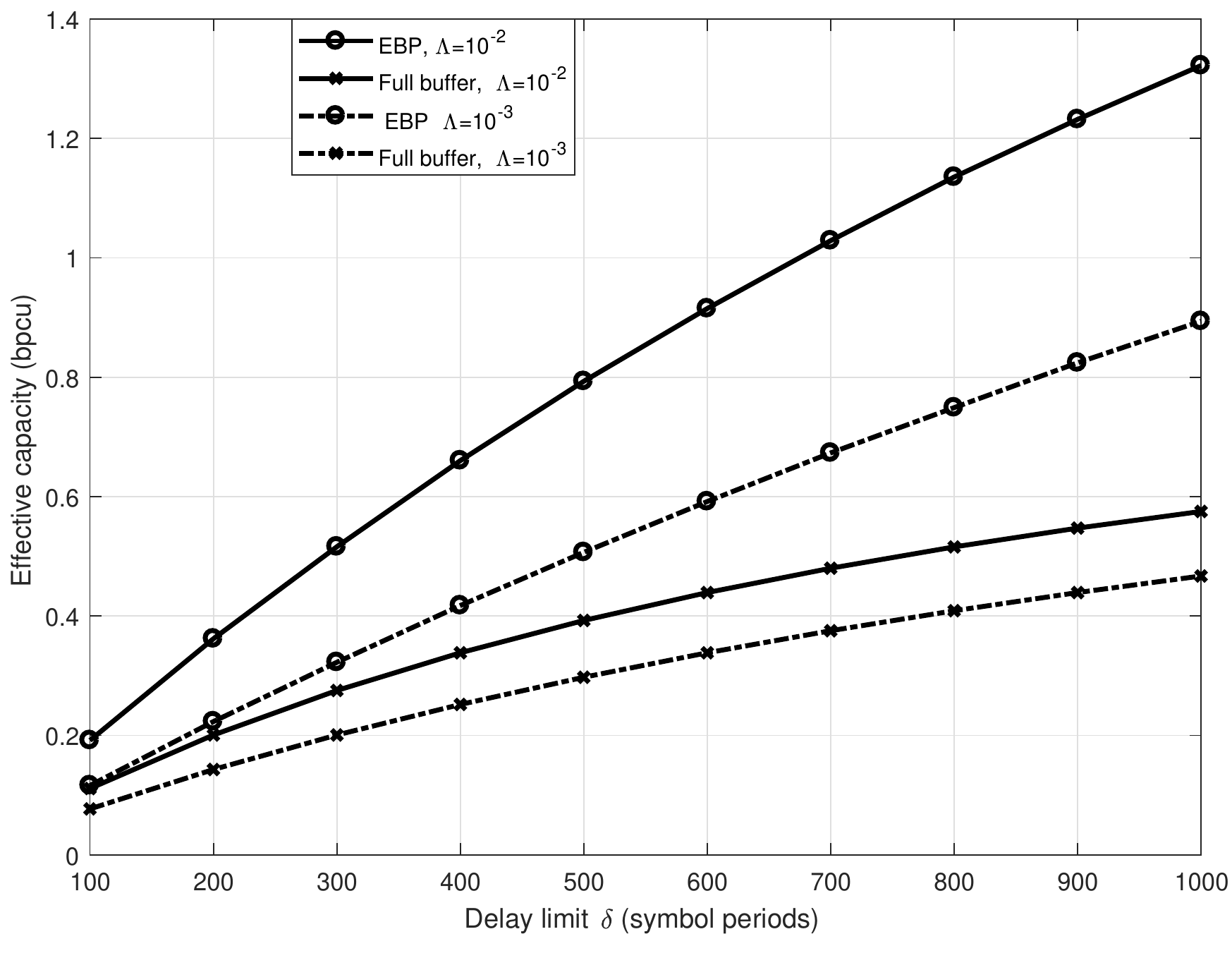}
	\centering
	\vspace{-4mm}
	\caption{Effective capacity vs $\delta$ with and without empty-buffer probability for $\Lambda=10^{-2}, 10^{-3}, \ P_c=0.2 \ W, \zeta=0.2, \ \lambda=1, \ \rho_{max}=10$ dB  and $\epsilon=10^{-3}$.}
	\vspace{-4mm}
	\label{EC_Fig}
\end{figure}

\section{Conclusion} \label{conclusion}
In this work, we presented a detailed analysis of the EEE for delay constrained networks in the finite blocklength regime. For Rayleigh block fading channels, we proposed an approximation for the EEE and characterized the EEE maximizers in terms of optimum error probability and power allocation. We showed that the advantage of considering non-empty buffer probability and flexible transmission power is twofold since it significantly improves both the EC and EEE of networks operating in the finite blocklength regime. However, Shannon's model overestimates the EEE and underestimates the optimum power allocation when compared to the exact finite blocklength model. We investigated the trade off between EEE and maximum delay limit $\delta$. The results show that in order to achieve some target latency, there is a sacrifice in EC and EEE while allowing for larger delays significantly boosts the EC and EEE.

\section*{Acknowledgments}
This work is partially supported by Aka Project SAFE (Grant no. 303532), and by Finnish Funding Agency for Technology and Innovation (Tekes), Bittium Wireless, Keysight Technologies Finland, Kyynel, MediaTek Wireless, and Nokia Solutions and Networks.

\appendices

\section{PROOF OF Lemma 1}
	Applying (\ref{psi}), we attain 
	\begin{align}\label{EC2}
	\begin{split}
	\psi(\rho,\theta,\epsilon)=\int_{0}^{\infty}
	\left( \epsilon+(1-\epsilon)e^{-\theta n r}\right)  e^{-z} \mathrm{d}z.
	\end{split}
	\end{align}
	From (\ref{eq3}), we have		
	\begin{align}\label{e1}
	e^{-\theta n r}=e^{-\theta n \log_2(1+\rho z)}e^{\theta \sqrt{n(1-\frac{1}{(1+\rho z)^{2}})} Q^{-1}(\epsilon)\log_2e}, 
	\end{align}	
	\begin{flalign}\label{e2}
	e^{-\theta n \log_2(1+\rho z)}%&=e^{ \frac{-\theta n \log(1+\rho z)}{\log 2}} \notag \\
	%&=(1+\rho z)^{\frac{-\theta n}{\log 2}} \notag \\
	&=(1+\rho z)^{\alpha}
	\end{flalign}
	which leads to
	\begin{align}\label{e3}
	e^{\theta \sqrt{n(1-\frac{1}{(1+\rho z)^{2}})} Q^{-1}(\epsilon)\log_2e}=e^{\beta \gamma}.  
	\end{align}
	We resort to the second order Taylor expansion to obtain $e^{\beta \gamma} = 1+(\beta \gamma)+\frac{(\beta \gamma)^2}{2}$. It follows from (\ref{e1}), (\ref{e2}), and (\ref{e3}) that the expression in (\ref{EC2}) can be written as
	\begin{align}\label{general2}
	\begin{split}
	&\psi(\rho,\theta,\epsilon)=  \epsilon 
	+(1-\epsilon)\left[ \int_{0}^{\infty}
	(1+\rho z)^{\alpha}e^{-z} \mathrm{d}z + \right. \\
	&\left. \beta\int_{0}^{\infty}
	(1+\rho z)^{\alpha} \gamma e^{-z} \mathrm{d}z + \frac{\beta^2}{2}\int_{0}^{\infty}
	(1+\rho z)^{\alpha} \gamma^2 e^{-z} \mathrm{d}z  \right].  
	\end{split}
	\end{align}
	The first integral can be written as $e^{\frac{1}{\rho}}  \rho^\alpha \Gamma\left(\alpha+1,\frac{1}{\rho} \right)$ By applying Laurent's expansion for $\gamma$ \cite{Complex_Analysis}, we obtain $\gamma\approx1-\frac{1}{2\left(1+\rho z \right)^2 }$. Hence, the second and third integrals can be written as  
	$e^{\frac{1}{\rho}} \beta \rho^\alpha  \left( \Gamma\left(\alpha+1,\frac{1}{\rho} \right)-\frac{\Gamma\left(\alpha-1,\frac{1}{\rho}\right) }{\rho^{2}}\right) $, and $e^{\frac{1}{\rho}} \frac{\beta^2}{2} \rho^\alpha  \left( \Gamma\left(\alpha+1,\frac{1}{\rho} \right)-\frac{\Gamma\left(\alpha-1,\frac{1}{\rho}\right) }{\rho^{2}}\right) $, respectively leading to (\ref{c2.2}).

\section{PROOF OF Lemma 3}
	For $\rho=0$, the achievable rate $r=0$ and the numerator of (\ref{EEE0}) becomes 0. Applying L'Hopital's rule for the denominator, we have   
	\begin{align}\label{lim1}
	\begin{split}
	&\lim\limits_{\rho\rightarrow 0} \frac{\rho}{\mathop{\mathbb{E}}\left[ r\right] }= \lim\limits_{\rho\rightarrow 0}\frac{1}{\mathop{\mathbb{E}}\left[ z\left(\frac{1}{(1+\rho z) \log 2}-\frac{Q^{-1}(\epsilon)\log_2(e)}{\sqrt{n}\left( 1+\rho z\right)^3 \gamma} \right)\right] }=0. \\
	\end{split}
	\end{align}
	Thus the denominator of (\ref{EEE0}) equals to $P_c$ yielding 0 for the EEE. 
	
	For the second condition, the numerator of (\ref{EEE0}) is upper bounded by $-\frac{\log \epsilon }{n \theta}$, while L'Hopital's rule for the denominator, we obtain
	\begin{align}\label{lim2}
	\begin{split}
	&\lim\limits_{\rho\rightarrow \infty}\frac{1}{\mathop{\mathbb{E}}\left[ z\left(\frac{1}{(1+\rho z) \log 2}-\frac{Q^{-1}(\epsilon)\log_2(e)}{\sqrt{n}\left( 1+\rho z\right)^3 \gamma} \right)\right] }=\infty. \\
	\end{split}
	\end{align}	
	Thus, the denominator of (\ref{EEE0}) tends to infinity which nulls the EEE. Hence, (\ref{EEE1}) holds as well under finite blocklength regime, which concludes the proof. 

\bibliographystyle{IEEEtran}
\bibliography{di}
\end{document}